\newcommand{\pspace}{\textsc{PSPACE}\xspace}
\newcommand{\np}{\textsc{NP}\xspace}
\newcommand{\conp}{\textsc{coNP}\xspace}
\newcommand{\tqbf}{\textsc{TQBF}\xspace}
\begin{document}
\newtheorem{assumption}{Assumption}
\newtheorem{observation}{Observation}
\title{Deciding the On-line Chromatic Number of a Graph with Pre-Coloring is PSPACE-Complete}
\author{Christian Kudahl\thanks{Supported in part by the Villum Foundation and the Danish Council for Independent Research, Natural Sciences.}}
\institute{Department of Mathematics and Computer Science\\
University of Southern Denmark}
\maketitle
\begin{abstract}
The problem of determining if the on-line chromatic number of a graph is less than or equal to $k$, given a pre-coloring, is shown to be \pspace-complete.
\end{abstract}

\section{Introduction}
In the on-line graph coloring problem, the vertices of a graph are revealed one by one to an algorithm.
When a vertex is revealed the adversary reveals which other of the revealed vertices it is adjacent to.
The algorithm gives a color to the vertex. This color has to be different from all colors found on neighboring vertices.
The goal is to use as few colors as possible.

We let $\chi(G)$ denote the \emph{chromatic number} of $G$. This is the number of colors that an optimal off-line algorithm needs to color $G$.
Similarly, we let $\chi^O(G)$ denote the \emph{on-line chromatic number} of $G$. This is the smallest number of colors that the best on-line algorithm needs
to guarantee that for any ordering of the vertices, it will be able to color $G$ using at most $\chi^O(G)$ colors. It is useful to think of this
algorithm as knowing the graph in advance but not the vertex ordering. As an example, $\chi^O(P_4)=3$, since if two isolated vertices are presented
first, the algorithm will be unable to decide if it is optimal to give them the same or different colors. Clearly, $\chi(P_4)=2$.

The traditional measure of performance of an on-line algorithm is \emph{competitive analysis} \cite{Sleator}. Here, the performance of an algorithm is compared to the 
performance of an optimal off-line algorithm. In the on-line graph coloring problem, an algorithm $A$ is said to be \emph{$c$-competitive} if it holds,
that for any graph $G$, and for any ordering of the vertices in $G$, the number of colors used by $A$ is at most $c$ times the chromatic number of $G$.
For the on-line graph coloring problem, there does not exist $c$-competitive algorithms for any $c$ even if the class of graphs is restricted to trees \cite{gyarfas}.
This makes this measure less desirable to use in this context.

As an alternative \emph{on-line competitive} analysis was introduced for on-line graph coloring \cite{Gyarfas_on-linecompetitive}.
The definition is similar to competitive analysis, but instead of comparing
with the best off-line algorithm, one compares with the best on-line algorithm. In the case of on-line graph coloring, an algorithm is \emph{on-line $c$-competitive}
if for any graph $G$, and for any ordering of the vertices, the number of colors it uses is at most $c$ times the on-line chromatic number.

With the definition of on-line competitive analysis, a natural problem arose. How computationally hard is it given a graph $G$ and a $k \in \mathbb{N}$ to
decide if $\chi^O(G) \leq k$. In \cite{conjecture}, it was shown that it is possible in polynomial time to decide if
$\chi^O(G) \leq 3$ when $G$ is triangle free or connected. They conjectured it \np-complete to decide if $\chi^O(G) \leq 4$. In this paper, we consider the generalization of the problem where a part
of the graph has already been presented and colored (we refer to this as the pre-coloring). 
We show that it is \pspace-complete to decide if the rest of the graph, when presented in a on-line fashion,
can be colored such that a total of
at most $k$ colors is used for some given $k$. 

\section{Preliminaries}
On-line graph coloring can be seen as a game. The two players are known as the drawer and the painter. The two players agree on a graph $G=(V(G),E(G))$ and a $k \in \mathbb{N}$.
A move for the drawer is presenting a vertex (sometimes we say it request a vertex). 
It does not specify which vertex in $G$ the presented vertex corresponds to, but it specifies which of the already presented vertices that this new vertex is adjacent to.
The presented graph must always be an induced subgraph of $G$.

A move for the painter is assigning a color from $\{1,\ldots,k\}$ to the newly presented vertex. The color has to be
different from the colors that he previously assigned to its neighbors. If the painter manages to color the entire
graph, he wins. If he is ever unable to color a vertex (because all colors are already found on neighbors to this vertex) he loses.

When analyzing games, one is often interested in finding out which player has a winning strategy. A game is said to be \emph{weakly solved}
if it is known which player has a winning strategy from the initial position. It is said to be \emph{strongly solved} if it is known which player
has a winning strategy from any given position. This definition is the motivation behind the assumption to have a pre-coloring. We prove that to strongly
solve the game for a given graph, one must, in some cases, solve positions, where it is \pspace-hard to determine if the drawer or the painter has a win from that position.
Note that it may not be \pspace-hard to weakly solve the game - see closing remarks.

We consider the states in the game after an even number of moves. This means that the game has not started yet or the painter has just assigned a color to a vertex.
Such a \emph{state} can be denoted by $(G,k,G',f)$. Here, $G$ is the graph they are playing on and $k \in \mathbb{N}$ is the number of colors the painter is allowed to use.
Furthermore, $G'$ is the induced subgraph that has already been presented and colored and $f: \; V(G') \rightarrow \{1,\ldots,k\}$ is a function that describes what colors have been assigned to
the vertices of $G'$. 
Note that the painter does not get information on how to map the vertices of $G'$ into $G$ (in fact, the drawer does not have to decide this yet).

We treat the following problem: Let a game state $(G,k,G',f)$ be given. Does the painter have a winning strategy from this state?
We show that this problem is \pspace-complete. The problem is equivalent to deciding if the on-line chromatic number of $G$ is less 
than or equal to $k$ given that the vertices in an
induced subgraph isomorphic to $G'$
have already been given the colors dictated by $f$. This is also known as a pre-colored graph.

Note that the proof here also works in the model where the painter gets information on how the vertices in $G'$ are mapped to those in $G$. In fact, a slightly simpler construction
would suffice in that case. The model where this information is not available seems more reasonable though, since the pre-coloring is used to represent a state in the game
where this information is indeed not available.

We show a reduction from the totally quantified boolean formula (\tqbf) problem. In this problem, we are given a boolean formula:
\[
\phi= \forall x_1 \exists x_2 \; \ldots \; \exists x_n F(x_1,x_2,\ldots,x_n)
\]
We want to decide if $\phi$ is true or false. This problem is known to be \pspace-complete even if $F$ is assumed to be in conjunctive normal form with 3 literals in each clause (\cite{Stockmeyer19761}).
Since the complement to any language in \pspace is also in \pspace, this is also \pspace-complete if $F$ is in disjunctive normal form with 3 literals in each term (3DNF). This is the form we will
use here. We let $t_i$ denote the $i$'th term. For convenience, we will assume the number of variables to be even and that the first quantifier is $\forall$ followed by alternating quantifiers. This is possible since any 
\tqbf in 3DNF can be transformed to such a formula by adding new variables.

One such formula could for example be:
\[
	\forall x_1 \exists x_2 \forall x_3 \exists x_4 \; (x_1 \land x_2 \land \bar{x}_4) \lor (\bar{x}_1 \land x_2 \land x_3) \lor (\bar{x}_1 \land \bar{x_2} \land x_3)
\]
This formula has four variables, $x_1$, $x_2$, $x_3$, and $x_4$, and three terms, $t_1$, $t_2$, and $t_3$. The term $t_1$ contains $x_1$, $x_2$, and $\bar{x}_4$ (we also say that they \emph{are in} the first term).

\section{PSPACE Completeness}
In this section, we show that it is \pspace-complete to to decide if the painter has a winning strategy from a game state $(G,k,G',f)$.
First we note, that the problem is in \pspace.

\begin{observation} \label{obob}
The problem of deciding is the drawer has a winning strategy from state $(G,k,G',f)$ is in \pspace.
\end{observation}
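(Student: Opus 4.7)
The plan is to exhibit a recursive alternating algorithm that decides the problem in polynomial space. The key observation I would exploit is that the game is short: each round of play adds exactly one vertex to the presented induced subgraph $H$, and the game must halt once $|V(H)|=|V(G)|$ (painter wins) or when the painter runs out of colors (drawer wins). Consequently the recursion depth is at most $|V(G)|-|V(G')|$, which is polynomial in the input size.

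I would define a recursive predicate $\mathrm{DrawerWins}(G,k,H,f)$ as follows. If $|V(H)|=|V(G)|$, return false. Otherwise, enumerate candidate drawer moves one at a time, each specified by a subset $S\subseteq V(H)$ declaring the neighborhood of a new vertex $v$ in $H+v$. For each such move, first test its \emph{legality} (see below); if it is legal and the set $\{1,\dots,k\}\setminus f(S)$ is empty, return true immediately. Otherwise, for each available color $c$ in that set, recurse on the state $(G,k,H+v,f\cup\{v\mapsto c\})$, and return true as soon as some legal drawer move is found for which every painter response recurses to true. If no drawer move does so, return false. At any instant only one branch of the recursion sits on the stack; each frame stores a polynomial-size state and a pointer into the enumeration it is performing, and the depth is $O(|V(G)|)$, so the total space used is polynomial.

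The step that requires a little care is the legality check, and this is where I expect the main conceptual work to lie: the enlarged graph $H+v$ must still embed as an induced subgraph of $G$ in a way that extends the fixed embedding of the pre-colored $G'$. Crucially, because the painter does not see the drawer's embedding, we only need \emph{some} such embedding of $H+v$ to exist — we do not have to commit to one embedding across rounds, only guarantee that at each state one is available. This is an induced-subgraph-isomorphism question and hence in \np: guess an embedding of $V(H+v)$ into $V(G)$ extending the embedding of $G'$ and verify that it induces exactly the declared adjacencies. Since $\np\subseteq\pspace$, the check is absorbed into our space bound. With legality handled, the recursion yields the desired \pspace upper bound, and the complementary statement for the drawer follows because \pspace is closed under complement.
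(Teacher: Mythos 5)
Your proposal is correct and follows essentially the same route as the paper: a depth-first (post-order) traversal of the polynomially-deep game tree, enumerating the drawer's moves as neighborhood subsets of the presented vertices, filtering them by an induced-subgraph-isomorphism legality test that is \np and hence absorbed into the \pspace bound, and enumerating the painter's moves as the $k$ colors. The only cosmetic difference is that you phrase it as a recursive predicate with an explicit stack-space accounting, and your parenthetical about whether the embedding of $G'$ is fixed is harmless since, as you note, only the existence of some embedding of the currently presented graph is required at each state.
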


To see this, we see that the game always ends within at most $2V(G)$ moves.
We need to show that from each state, the possible following moves can be enumerated in polynomial space.
If the painter is about the move, his possible moves are one of the colors $\{1,\ldots,k\}$.
This can be done is polynomial space, and can be enumerated based on the value of the color.
If the drawer is about to move, his move consists of presenting a vertex that is adjacent to some of the vertices that have already been presented.
If $v$ vertices have been presented already, this means that there are possibly $2^v$ different moves for him. He can enumerate these but only consider
those where the resulting graph is an induced subgraph of $G$. This problem is \np-complete, but it can be solved in polynomial space.

Using this, we do a post-order seearch in the game tree. In each vertex, we note who has a winning strategy from that given state.
For the leaves, we note who has won the game (done by checking if all vertices have been colored). After traversing the tree, we can read in the root
if the painter or the drawer has a winning strategy. This shows that the problem is in \pspace.

To prove that the problem is \pspace-hard,
we show how to transform a totally quantified boolean formula $\phi=\forall x_1 \exists x_2 \ldots \exists x_n \; F(x_1,\ldots,x_n)$ ($F$ is in 3DNF) into a game state $(G,k,G',f)$ such
that $\phi$ is true if and only if the painter has a winning strategy from $(G,k,G',f)$.

\begin{itemize}
\item The number of variables in $\phi$ is $n$.
\item The number of terms in $F$ is $t$.
\item We define $k=t+3n/2 + 2$ to be the number of colors that the painter is allowed to use. 
\end{itemize}
We now describe $G=(V(G),E(G))$. We start by describing some graphs that are induced subgraphs of $G$. The graphs are $A=(V(A),E(A))$, $B=(V(B),E(B))$,
$H=((V(H),E(H))$, $X=(V(X),E(X))$, and $T=(V(T),E(T))$. For a $n \in \mathbb{N}$, we use $[n]$ to denote $\{1,\ldots,n\}$.

\begin{align*}
V(A) = & \{ a_i \; | \; i \in [k-3] \} \\
E(A) = & \{ (a_i,a_j) \; | \; i \neq j \} \\
V(B) = & \{ b_i \; | \; i \in [10n+3t] \} \\
E(B) = & \emptyset \\ \\
V(H_i) = & \{h_i^1, h_i^2, h_i^3, h_i^4 \} \\
E(H_i) = & \{(h_i^1,h_i^2), (h_i^2,h_i^3), (h_i^3,h_i^4) \} \\
V(H) = & \bigcup_{i=1}^{n/2} V(H_i) \\
E(H) = & \{ (v,w) \; | \; v \in V(H_i), w \in v(H_j), i \neq j \} \cup \bigcup_{i=1}^{n/2} E(H_i) \\
\end{align*}
\begin{align*}
V(T) = & \{ t_i \; | \; i \in [t] \} \\
E(T) = & \{ (t_i,t_j) \; | \; i \neq j \} \\
V(X) = & \{ x_i \;| \; i \in [n] \} \cup  \{ \bar{x}_i \; | \; i \in [n] \} \\
E(X) = & \{ (x_i,\bar{x}_i) \; | \; i \in [n] \}
\end{align*}

We now describe $G=(V(G),E(G))$. It contains $A$, $B$, $H$, $T$, and $X$ as well as some additional edges between them. It also contains two additional vertices.
After some lines, there are comments in parenthesis. They are not part of the construction, their purpose is to make to construction easier to read.

\begin{align*}
V(G)=& V(A) \cup V(B) \cup V(H) \cup V(T) \cup V(X) \cup \{m,c\} \\
E(G)=& E(A) \cup E(B) \cup E(H) \cup E(T) \cup E(X) \\
& \cup \{ (x_i,t_j) \; | \; \text{$x_i$ is in term $t_j$}, \; i \in [n], \; j \in [t] \} \\ 
& \cup \{ (\bar{x}_i,t_j) \; | \; \text{$\bar{x}_i$ is in term $t_j$}, \; i \in [n], \; j \in [t]  \} \\
&\mbox{ \text{(Connecting the variables to the terms)}} \\
& \cup \{ (h_i^j,x_{2i-1}) \; | \; i \in [n/2], j \in [4] \} \\
& \cup \{ (h_i^j,x_{2l}), (h_i^j,\bar{x}_{2l}) \; | \; j \in \{2,4\}, i \in [n/2], i \leq l \leq n/2 \} \\ 
& \cup \{ (v,w) \; | \; v \in V(H), w \in V(T) \} \\
& \mbox{ \text{(Connecting gadgets to variables and terms)}} \\
& \cup \{ (c,v) \; | \; v \in V(A) \cup V(B) \} \\
& \cup\{ (v,w) \; | \; v \in V(X), \; w \in V(A) \} \\
& \mbox{ \text{(Only two colors can be used on the variables)}} \\
& \cup \{ (m,v) \; | \; v \in v(T) \} \\
& \mbox{ \text{(Blocking color 1 from the terms)}} \\
& \cup\{ (b_j,x_{2i-1}),(b_j,\bar{x}_{2i-1}) \; | \; i \in [n/2], \; j \leq 3i-2 \} \\
& \mbox{ \text{(Identifies odd variables)}} \\
& \cup\{ (b_j,x_{2i}) \; | \; i \in [n/2], \; j \leq 3i-1 \} \\
& \mbox{ \text{(Identifies even variables)}} \\
& \cup\{ (b_j,\bar{x}_{2i}) \; | \; i \in [n/2], \; j \leq 3i \} \\
& \mbox{ \text{(Identifies even variables)}} \\
& \cup\{ (b_i,h_l^j) \; | \; l \in [n/2], \; i \leq l+3n/2, \; j \in [4] \} \\
& \mbox{ \text{(Identifies gadgets)}} \\
& \cup\{ (b_i,t_j) \; | \; j \in [t] \; i \leq j+2n \} \\
& \mbox{ \text{(Identifies gadgets)}} \\
\end{align*}

We now need to define $G'=(V(G'),E(G'))$ and $f: \; V(G') \rightarrow \{1,\ldots,k\}$ to complete the reduction.
\begin{align*}
V(G')=& \{ a'_i \; | \; i \in [k-3] \} \cup \\
& \{ b'_i \; | \; i \in [10n+3t] \} \cup \\
& \{ c', m' \} \\
E(G')=& \{ (c',a'_i) \; | \; i \in [k-3] \} \cup \\
& \{ (c',b'_i) \; | \; i \in [10n+3t] \} \cup \\
& \{ (a'_i,a'_j) \; | \; i \neq j, \; i,j \in [k-3] \}
\end{align*}
We define $f: \; V(G') \rightarrow \{1,\ldots,k\}$ in the following way.
\begin{align*}
f(a_i)&=i+3 \\
f(b_i)&=3 \\
f(c)&=1 \\
f(m)&=1
\end{align*}

Before continuing, we explain the main idea of the construction. 
The vertices in $X$ represent the variables and will only get color 1 or 2. This is ensured by connecting them to $A$ and $B$ where all but two colors are found.
The vertices in $T$ represent the terms. If no terms are satisfied (which has to be the case to make a formula in 3DNF false) there will
be at least one vertex in $T$ which cannot get a color.

The pre-coloring, $G'$, serves some purposes.
The purpose of $B$ is that counting the number of edges a vertex has to $B$ helps the painter identify which vertex he is painting.
With $A$, we are able to put restrictions on the colors the vertices in $X$ can get.
The only point of $c$ is to ensure that $G'$ can only be induced in $G$ by mapping $a'_i$ vertices to vertices in $A$ and $b'_i$ vertices to vertices in $B$.
Although $G'$ is isomorphic to the graph induced by $V(A) \cup V(B) \cup V(C)$, the painter gets no guarantee, that it is indeed these vertices. 
We start by showing, that there are no other
ways to induce $G'$ in $G$.
\begin{lemma} \label{id}
Let $g: V(G') \rightarrow V(G)$ be an injective function such that $(v,w) \in E(G')$ if and only if $(g(v),g(w)) \in E(G)$.
The following must hold:
\begin{itemize}
\item $g(c') = c$
\item $\forall i \in [k-3] \colon g(a'_i) \in V(A)$
\item $\forall i \in [2n+t] \colon g(b'_i) \in V(B)$
\item $g(m')=m$
\end{itemize}
\end{lemma}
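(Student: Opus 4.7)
The strategy is to pin down $g(c')$ first and then propagate the constraint to the remaining vertices. In $G'$ the vertex $c'$ has a very distinctive neighborhood: it is adjacent to a $(k-3)$-clique (the $a'_i$'s), to $10n+3t$ pairwise non-adjacent vertices (the $b'_i$'s), and there are no edges between these two groups. The plan is to show that $c$ is the only vertex of $G$ whose neighborhood induces a copy of $K_{k-3}$ together with an independent set of size $10n+3t$ disjoint from it. The witness is immediate: $N(c)=V(A)\cup V(B)$, with $V(A)$ a $K_{k-3}$, $V(B)$ an independent set of size $10n+3t$, and no edges between them by construction.

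The core of the proof is then a case analysis ruling out every other candidate image for $c'$. If $v\in V(A)$, every $K_{k-3}$ inside $N(v)=(V(A)\setminus\{v\})\cup\{c\}\cup V(X)$ must use essentially all of $V(A)\setminus\{v\}$; but then every remaining neighbor of $v$ is still adjacent to that clique, leaving no room for an independent side. If $v\in V(X)$, the only $(k-3)$-clique in $N(v)$ is essentially $V(A)$ itself (the other $N(v)$-neighbors are too few to form one), and the candidate independent vertices must come from the at most $3n/2$ $b$-neighbors of $v$ together with a handful of $h$-, $t$-, and $\bar{x}$-neighbors, which are pairwise joined or joined to the $b$-neighbors through the identification edges; this pool is far smaller than $10n+3t$, thanks to the padded choice of $|V(B)|$. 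If $v\in V(B)\cup V(H)\cup V(T)\cup\{m\}$, the maximum clique inside $N(v)$ is bounded by combining $V(T)$ with a maximum clique in $V(H)$ (of size at most $n/2+1$, since each $H_i$ is a $P_4$), giving $K_{t+n/2+1}$, which is strictly less than $k-3=t+3n/2-1$ for $n\ge 4$; hence no $K_{k-3}$ even exists in $N(v)$.

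With $g(c')=c$ established, the image of $\{a'_i\}$ is a $K_{k-3}$ inside $N(c)=V(A)\cup V(B)$; since $V(B)$ is independent and has no edges to $V(A)$, this clique lies entirely in $V(A)$, and cardinality forces $g$ to biject the $a'_i$'s onto $V(A)$. Each $b'_i$ has degree one in $G'$ (only $c'$ as neighbor), so $g(b'_i)$ must be a neighbor of $c$ not already used, forcing $g(b'_i)\in V(B)$. Finally, $m'$ is isolated in $G'$, so $g(m')$ must be non-adjacent to $c$, to every vertex of $V(A)$, and to every vertex of $V(B)$; the identification edges guarantee that every vertex of $V(H)\cup V(T)$ has at least one $b$-neighbor, and every vertex of $V(X)$ is adjacent to $V(A)$, so $m$ is the unique candidate satisfying all three non-adjacency constraints.

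The hardest part will be the quantitative step for $v\in V(X)$: because $V(A)\subseteq N(x_i)$ the clique test alone is not enough, so one must argue directly that the neighbors of $x_i$ outside $V(A)$ cannot support an induced independent set of size $10n+3t$. This is exactly the place where the padded size $10n+3t$ of $V(B)$ is used, and the bookkeeping has to be done carefully for each sub-type of $x$-vertex, taking into account that $b$-neighbors of an $x$-vertex are joined to all $h$- and $t$-vertices via the identification edges.
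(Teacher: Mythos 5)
Your overall architecture is sound and your second half (once $g(c')=c$ is fixed: the $a'_i$ must form a clique in $N(c)=V(A)\cup V(B)$ and hence exhaust $V(A)$, the $b'_i$ then land in $V(B)$, and $m'$ goes to the unique vertex non-adjacent to $V(B)\cup\{c\}$) is essentially identical to the paper's. Where you diverge is in pinning down $g(c')$: the paper does this with a one-line degree count --- $\deg_G(c)=(k-3)+(10n+3t)$, while every other vertex of $G$ is adjacent to no $b_i$ with $i>2n+t$ and to at most $2t+5.5n+1$ vertices outside $B$, so no other vertex has degree as large as $\deg_{G'}(c')$. Your replacement (that only $c$ has a neighborhood containing a $K_{k-3}$ plus a disjoint, completely non-adjacent independent set of size $10n+3t$) is a strictly stronger necessary condition and forces the long case analysis you describe. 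That analysis can be completed, but as written it has two concrete problems. First, your clique bound for $V(H)$ is wrong: the $H_i$ are pairwise \emph{completely joined}, so a maximum clique in $V(H)$ takes one edge from each $P_4$ and has size $n$, not $n/2+1$; the corrected bound $t+n$ still satisfies $t+n<k-3=t+3n/2-1$ only for $n>2$, and at $n=2$ a $K_{k-3}$ \emph{does} exist in $N(b_1)$, so your claim that ``no $K_{k-3}$ even exists in $N(v)$'' fails there and you must fall back on counting anyway. Second, the $V(X)$ case that you flag as the hardest and leave as ``bookkeeping'' is in fact immediate by cardinality: for any $x_i$, $N(x_i)$ consists of $V(A)$, its partner $\bar{x}_i$, at most four $H$-vertices, at most $t$ term vertices and at most $3n/2$ vertices of $B$, so $|N(x_i)|\le k+t+2$, far below the required $(k-3)+(10n+3t)$. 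That observation is exactly the paper's degree argument, and it disposes of every non-$c$ candidate (including $V(A)$, $V(B)$, $V(H)$, $V(T)$ and $m$) uniformly, with no clique or independent-set analysis needed. I would recommend replacing your case analysis by the degree count; if you keep your route, fix the $H$-clique bound and handle $n=2$ explicitly.
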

\begin{proof}
No vertex in $G$ has a degree as high as that of $c$. This is true since no other vertex is adjacent to a vertex $b_i$ with $i>2n+t+1$
and there are only $2t+5.5n+1$ vertices outside $B$.
Because of this, it is only possible to map $c'$ (with just as high degree) to $c$.

The only neighbors of $c$ are those in $V(A) \cup V(B)$. Thus, the $a'_i$ and $b'_i$ vertices must be mapped to these.
It is easy to distinguish between vertices in $A$ and those in $B$, since $A$ is a complete graph and $B$ has no edges.
It now follows that $m'$ must be mapped to $m$, since this is the only vertex in $G$ that is not adjacent to any vertices in $V(B) \cup \{c\}$.
\qed
\end{proof}
We have now shown that the only possibility for inducing $G'$ in $G$ is mapping $c'$ to $c$, $m'$ to $m$, and the $a'_i$ and $b'_i$ vertices to
the $a_i$ and $b_i$ vertices respectively.

\begin{lemma} \label{id2}
When the drawer presents a vertex $v$, the painter is always able to identify an $i$ and which one of the following statements about $v$ holds
\begin{itemize}
\item $v$ is $\bar{x}_i$ and $i$ is even.
\item $v$ is either $x_i$ or $\bar{x}_i$ and $i$ is odd.
\item $v$ is $x_i$ and $i$ is even.
\item $v$ is $h_i^j$ for some $j \in [4]$.
\item $v$ is $t_i$.
\end{itemize}
\end{lemma}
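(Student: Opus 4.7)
My plan is to exploit the ``Identifies\dots'' edges between $B$ and the rest of $G$: the painter will identify $v$ by counting how many of $v$'s neighbors lie in the image of $B$. This count is well-defined from the painter's point of view because Lemma~\ref{id} forces the pre-coloring vertices $b'_1,\ldots,b'_{10n+3t}$ to embed precisely into $V(B)$, so when the drawer announces $v$ together with its adjacencies to the already-revealed (in particular the pre-colored) vertices, the painter can read off $n_v := |N(v) \cap V(B)|$ directly. This justification is the only place the argument uses anything nontrivial, and it is immediate from the previous lemma.

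Next I would simply read the values of $n_v$ off the construction. The ``Identifies odd variables'' edges give $n_v = 3i-2$ whenever $v \in \{x_{2i-1},\bar x_{2i-1}\}$. The two ``Identifies even variables'' blocks give $n_v = 3i-1$ for $v = x_{2i}$ and $n_v = 3i$ for $v = \bar x_{2i}$. The first ``Identifies gadgets'' block gives $n_v = l + 3n/2$ for $v = h_l^j$, independent of $j \in [4]$, and the second gives $n_v = j + 2n$ for $v = t_j$.

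The conclusion then follows by noting that, as $i$ ranges over $[n/2]$, the variable counts $3i-2,\, 3i-1,\, 3i$ exactly fill $\{1,\ldots,3n/2\}$, the gadget counts fill $\{3n/2+1,\ldots,2n\}$, and the term counts fill $\{2n+1,\ldots,2n+t\}$. These three ranges are pairwise disjoint, and within each range $n_v$ pins down one of the five cases listed in the lemma: a single $x_{2i}$, a single $\bar x_{2i}$, the unordered pair $\{x_{2i-1},\bar x_{2i-1}\}$, a single term $t_i$, or the four-element set $\{h_i^1,h_i^2,h_i^3,h_i^4\}$ of a common gadget. I don't anticipate any real obstacle here; the entire argument beyond the appeal to Lemma~\ref{id} is a disjointness-of-intervals check that explains why the constants $3n/2$ and $2n$ were chosen in the first place.
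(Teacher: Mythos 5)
Your proposal is correct and follows the same route as the paper: the painter uses Lemma~\ref{id} to know which pre-colored vertices form $B$, counts $v$'s neighbors there, and decodes the identity from that count, exactly as the paper does via its table (your residues $3i-2,3i-1,3i$ correspond to the paper's cases $j\equiv 1,2,0 \pmod 3$). The interval-disjointness check and the per-case counts you read off the construction all match the paper's.
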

\begin{proof}
Because of Lemma \ref{id}, the painter can count the edges between $v$ and vertices in $B$.
We call this number $j$. The following table shows how to compute $i$ from $j$ and how to decide
which one of the above statements holds.
\begin{center}
  \begin{tabular}{| >{\centering\arraybackslash}m{1in} | >{\centering\arraybackslash}m{1in} |>{\centering\arraybackslash}m{2in} |}
    \hline
$j$ & $i$ & Statement about $v$ \\ \hline
\parbox[t]{6cm}{$j \leq 3n/2$, \\ $j \equiv 0 \pmod{3}$} & $i=\frac23 j$ & { \centering $v$ is $\bar{x}_i$ and $i$ is even } \\ \hline
\parbox[t]{6cm}{$j \leq 3n/2$, \\ $j \equiv 1 \pmod{3}$} & $i=\frac{2j+1}3$ & { \centering $v$ is either $x_i$ or $\bar{x}_i$ ($i$ is odd) } \\ \hline
\parbox[t]{6cm}{$j \leq 3n/2$, \\ $j \equiv 2 \pmod{3}$} & $i=\frac{2j+2}3$ & { \centering $v$ is $x_i$ and $i$ is even } \\ \hline
\parbox[t]{6cm}{$ 3n/2 < j \leq 2n$, } & $i=j-3n/2$ & { \centering $v$ is $h_i^j$ for some $j \in [4]$ } \\ \hline
\parbox[t]{6cm}{$ 2n < j $} & $i=j-2n$ & { \centering $v$ is $t_i$ } \\ \hline
  \end{tabular}
\end{center}
\qed
\end{proof}
We are now ready for the main proof. We begin with the easier implication.
\begin{lemma} \label{one}
If $\phi$ is false, then the drawer has a winning strategy from the state $(G,k,G',f)$.
\end{lemma}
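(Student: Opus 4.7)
The plan is to convert a winning strategy for the existential player of $\neg\phi$ into a winning strategy for the drawer. Since $\phi$ is false, $\neg\phi = \exists x_1 \forall x_2 \cdots \forall x_n \neg F$ is true; I will fix such a winning strategy $\sigma$ and have the drawer play in $n/2$ rounds, one per variable pair $(x_{2i-1}, x_{2i})$.

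In round $i$, the drawer will consult $\sigma$ for the value $\tau_{2i-1}$ of $x_{2i-1}$ and present the literal $x_{2i-1}$ or $\bar x_{2i-1}$ accordingly. The painter will be forced to use color $1$ or $2$ (the adjacencies to $V(A)$ and the relevant $V(B)$-block forbid everything else) and, by Lemma~\ref{id2}, will be unable to distinguish the two literals for odd $i$; adopting the convention that color~$1$ codes ``true'' lets the drawer thereby establish $\tau_{2i-1}$. Next the drawer will present the four vertices of $H_i$ using the online $P_4$ adversary sketched in the introduction --- two non-adjacent vertices first, then the remaining two completing an induced $P_4$ --- which forces three distinct colors on $H_i$. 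Finally the drawer will present $x_{2i}$ and $\bar x_{2i}$, whose adjacency forces the painter to split colors $1$ and $2$ between them, and the drawer will feed the painter's choice back to $\sigma$ as the universal response $\tau_{2i}$.

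The central technical claim to verify is that all three colors spent on each $H_i$ come from $\{4,\ldots,k\}$, so that $V(H)$ consumes $3n/2$ distinct large colors in total. The argument I will use rests on the nontrivial automorphism of $P_4$ swapping its inner and outer vertex pairs: after all four $H_i$-vertices have been presented, the drawer still has two labelings of them into $\{h_i^1,\ldots,h_i^4\}$ consistent with the revealed edges, and each of the four presented vertices is inner in one labeling and outer in the other. Whenever the painter uses a small color on some $H_i$-vertex, the drawer will pick a labeling putting that vertex at an inner slot ($h_i^2$ or $h_i^4$); the subsequent pair $x_{2l}, \bar x_{2l}$ with $l \ge i$ then has no legal coloring, because both its members are adjacent to the small-colored inner vertex and restricted to $\{1,2\}$. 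The painter is therefore compelled to color every $H_i$-vertex from $\{4,\ldots,k\}$, and since distinct gadgets are fully adjacent these colors are pairwise disjoint across $i$.

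By construction the final assignment $\tau$ falsifies $F$. Hence every term contains a false literal, meaning each $t_j \in V(T)$ is adjacent to a color-$2$ vertex; combined with its neighbors $m$ (color~$1$), the relevant $V(B)$-block (color~$3$), and all $3n/2$ colors used on $V(H)$, its legal palette lies in $\{4,\ldots,k\}$ minus the $V(H)$-colors, a set of size $(k-3) - 3n/2 = t - 1$. The drawer will then present $V(T)$, which is a $t$-clique and so needs $t$ distinct colors; only $t-1$ are available, so the last term has no legal color and the drawer wins. The main obstacle will be paragraph three: formalizing the automorphism-labeling argument so that it works against every painter strategy for coloring $H_i$, regardless of the order in which the drawer chooses to present the four gadget vertices.
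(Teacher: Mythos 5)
Your proposal is correct and follows essentially the same route as the paper's own proof: the same round structure (odd literal pair, then $H_i$ via the $P_4$ adversary, then the even literal pair), the same exploitation of the painter's inability to distinguish $x_{2i-1}$ from $\bar x_{2i-1}$ to let the drawer control the odd variables, the same $P_4$-automorphism relabeling forcing all $H_i$-colors into $\{4,\ldots,k\}$ on pain of killing $x_{2i}$ and $\bar x_{2i}$, and the same final count of $t-1$ available colors on the $t$-clique $T$. Just state explicitly that \emph{both} $x_{2i-1}$ and $\bar x_{2i-1}$ are presented in round $i$, with the drawer deferring which is which until after the painter colors them, since the endgame requires every falsified literal vertex to actually carry color $2$.
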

\begin{proof}
We will call color 1 \emph{true} and the color 2 \emph{false}.
Since $\phi=\forall x_1 \exists x_2 \ldots \exists x_n \; F(x_1,\ldots,x_n)$ is false, it holds that
\[
	\exists x_1 \forall x_2 \ldots \forall x_n \; \lnot F(x_1,\ldots,x_n)
\]
This means that if two players alternately decide the truth values of $x_1,x_2,\ldots x_n$, there is a strategy $S$ for the player deciding the values of the
odd variables which makes $F$ false. The drawer is going to implement $S$.

The drawer will start by presenting the vertices in $X$ and $H$. It will do this in rounds.
In round $i$, $1 \leq i \leq n/2$,  it first presents $x_{2i-1}$ and $\bar{x}_{2i-1}$ in some order. It then presents the vertices of $H_i$ in some order.
Finally, it presents $x_{2i}$ and $\bar{x}_{2i}$ in some order. There are $n/2$ rounds. We want to show that the drawer can ensure that the following holds after round $i$:

\begin{itemize}
\item Each $H_j$ with $j \leq i$ has vertices with at least 3 different colors.
\item All $x_j$ and $\bar{x}_j$ with $j \leq i$ have been colored with colors true or false. 
\item Interpreting coloring as an assignment of truth values to variables $x_1,\ldots,x_i$, the drawer has a winning strategy in the game where the drawer and
the painter alternately decide the truth value for the remaining variables.
\item Either the colors true and false are not found in $H_i$ or the painter has lost the game.
\end{itemize}
For $i=0$, they all hold. Assume that they hold for some $i$.
We show that the drawer can present the vertices in round $i+1$ in an order that ensures that they hold after it.

The drawer starts by presenting $x_{2i-1}$ and $\bar{x}_{2i-1}$. Among the vertices that have already been presented (including those in the pre-coloring),
it holds that each vertex is either adjacent to both $x_{2i-1}$ and $\bar{x}_{2i-1}$ or none of them (note that $H_i$ has not been presented yet).
This means that the painter is unable to identify which is which.
Since they all both adjacent to all vertices in $A$ and some in $B$, the
only available colors for them are true and false. The painter has to assign true to one of them and false to the other.
The drawer now decides which one received the color true according to his winning strategy (we know that he has one from the induction hypothesis).
This ensures that he will have a winning strategy independent of whether variable $x_{2i}$ is set to true or false.

Now, the drawer presents two non-adjacent vertices from $H_i$.
The painter cannot identify which, since the vertices in $H_i$ are connected to the same vertices among those that have been presented.
If the painter gives these the same color, the drawer decides that they were $h_i^1$
and $h_i^4$. Otherwise, the drawer decides that they were $h_i^1$ and $h_i^3$. The drawer now presents the remaining two vertices of $H_i$ which results in
it containing at least three different colors. Note that the color 3 cannot be used in $H_i$, since all four vertices are adjacent to some vertices in $B$.

The drawer now presents $x_{2i}$ and $\bar{x}_{2i}$. According to Lemma \ref{id2}, the painter can identify which one is $x_{2i}$ and which one is $\bar{x}_{2i}$.
Again, the painter must color one true and the other false. This can be interpreted as the painter assigning a truth value to variable $x_{2i}$.
As we argued earlier, the drawer must still have a winning strategy if he decides the truth value of the remaining odd variables.

We need to argue that if a vertex in $H_i$ receives color true or false, the painter will immediately lose.
We first consider the case where color true is found on $h_i^2$ or $h_i^4$.
The painter will color $x_{2i}$ and $\bar{x}_{2i}$. These are adjacent to all vertices in $A$ (and some in $B$) meaning they can only get color true or false.
Since they are adjacent to $h_i^2$ and $h_i^4$, they cannot get the color true. Only color false is not available, and after one gets that,
the other cannot get any color. If the color true is found on $h_i^1$ or $h_i^3$ instead, the drawer changes the positions of $h_i^1$ and $h_i^4$
as well as those of $h_i^2$ and $h_i^3$. This is possible since it is not at this time possible for the painter to distinguish between $h_i^1$
and $h_i^4$ and between $h_i^2$ and $h_i^3$. This ensures that the color true does end up on $h_i^2$ or $h_i^4$ so we can argue in the same way.
The argument is similar if it is the color false is found in $H_i$.

This concludes the induction. We have now shown that after round $n/2$ all vertices in $X$ have been colored with colors true and false.
The truth assignment given to the variables in $X$ makes $F$ false. The drawer now presents all vertices in $T$ in any order. They cannot get the color true,
since they are adjacent to $m$ which has that color. Each vertex in $T$ is adjacent to a vertex in $X$ with color false since the truth assignment made $F$ false.
Furthermore, there are $3n/2$ colors that cannot be used on $T$ since they were used in $H$. Also, the color 3 cannot be used, since all vertices in $T$ are adjacent to
some in $B$. This leaves $k-2-3n/2-1=t-1$ colors. This is not enough to color the $t$ vertices, since they form a clique.
\qed
\end{proof}

We now show the other implication, which completes the proof.
\begin{lemma} \label{two}
If $\phi$ is true, then the painter has a winning strategy from the state $(G,k,G',f)$.
\end{lemma}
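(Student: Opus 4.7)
The plan is to build a winning strategy for the painter from a winning strategy $S$ for the existential player in $\phi$, interpreting color $1$ as \emph{true} and color $2$ as \emph{false}. By Lemma~\ref{id2} the painter can identify the index $i$ of every presented vertex, and for even $i$ he can also distinguish $x_i$ from $\bar{x}_i$; for odd $i$ the two candidates are indistinguishable at presentation time, which is exactly what lets the drawer play the role of the universal player for odd-indexed variables.

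The painter responds by vertex type. On a pair of odd-indexed variable vertices only colors $1$ and $2$ are available (since $A$ blocks $4,\ldots,k$ and $B$ blocks $3$), so he assigns them $1$ and $2$; the drawer's later asymmetric move---for instance presenting $h_i^1$, adjacent to $x_{2i-1}$ but not $\bar{x}_{2i-1}$, or a term containing exactly one of the pair---commits which vertex is $x_{2i-1}$ and thereby its truth value, read as the universal player's move in $\phi$. On an identified even vertex the painter consults $S$ on the odd-variable values committed so far and paints accordingly. On each gadget $H_i$ he uses a private palette $P_i\subseteq\{4,\ldots,k\}$ of three colors (pairwise disjoint across $i$, with $|\bigcup_i P_i|=3n/2$) via any on-line $3$-coloring of $P_4$. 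On a term vertex he paints color $2$ when all three of its literal-neighbors are colored $1$, and otherwise a fresh color from $\{4,\ldots,k\}\setminus\bigcup_i P_i$.

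The final accounting closes the argument: since $\phi$ is true and the painter followed $S$, the resulting assignment satisfies $F$, so some term $t_{j_0}$ has all three literal-neighbors colored $1$; then $t_{j_0}$ can take color $2$ (color $1$ is blocked by $m$, color $3$ by $B$, and $\bigcup_i P_i$ by $H$), while the other $t-1$ term vertices take the $t-1$ distinct colors of $\{4,\ldots,k\}\setminus\bigcup_i P_i$, giving $t$ distinct colors on the clique $T$. The part I expect to require the most care is that $S$ must be invokable at every even-variable commitment, which is automatic in the canonical order used in Lemma~\ref{one} but not if the drawer presents an even vertex $x_{2i}$ before committing all odd variables $S$ depends on; I would address this by evaluating $S$ along the order in which the drawer's moves actually \emph{force} commitments (odd ones via asymmetric presentations, even ones via the painter's colorings) and by arguing that any premature drawer presentation only grants the painter additional flexibility---for instance, freedom to paint an $H_i$ vertex with color $1$ or $2$ when its variable neighbors are not yet present---without upsetting the color budget above.
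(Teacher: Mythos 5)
Your overall plan coincides with the paper's: colors $1$ and $2$ encode truth values, the $H$-gadgets consume $3n/2$ of the $k-3$ remaining colors, and the painter wins by getting one term vertex colored $2$ so that the $t-1$ leftover colors cover the rest of the clique $T$. However, two of your steps fail against an adversarial presentation order, and these are precisely where the paper's proof does its real work. The first is your rule for term vertices: you use color $2$ only when all three literal-neighbors are \emph{already} colored $1$, and otherwise spend a fresh color. The drawer can simply present all of $T$ before any vertex of $X$; then no term ever qualifies for color $2$, you need $t$ fresh colors for the clique $T$, and only $t-1$ exist. The paper instead calls a request to $t_i$ \emph{good} when no literal of $t_i$ has yet been falsified (its neighbors may still be uncolored), colors such a vertex $2$ immediately, and then commits (its Phase 3) to coloring the not-yet-presented literal-neighbors $1$ later --- which is feasible because adjacency to the already-presented $t_i$ lets the painter tell $x_j$ from $\bar{x}_j$ from that point on.

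The second gap is the sentence you flag yourself but do not close: that ``any premature drawer presentation only grants the painter additional flexibility.'' This is not automatic, and it is the crux of the reduction. If the drawer presents $x_2$ before $x_1,\bar{x}_1$ and before any vertex of $H_1$, your painter must color $x_2$ by consulting $S$ with a guessed value of $x_1$; the drawer then presents the indistinguishable pair $x_1,\bar{x}_1$ and only afterwards commits their identities via $H_1$, choosing the value of $x_1$ that contradicts your guess. This reverses the quantifier order $\forall x_1 \exists x_2$ and defeats any strategy that merely ``evaluates $S$ in the order commitments are forced.'' The paper's resolution is quantitative, not informational: it defines \emph{normal play} (requiring $h_j^1$ and $h_j^4$ of every $H_j$ with $j\le i/2$ to precede any even $x_i$), and shows that a deviation lets the painter learn the bipartition of the $P_4$ gadget $H_j$ early enough to color it with only two colors instead of three, freeing one $H$-color to serve as the missing $t$-th color on $T$ (its Phase 2). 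Without some such compensation mechanism, your color budget of exactly $t-1$ fresh colors for $T$ cannot absorb the drawer's ability to scramble the order, so the proposal as written does not yield a winning strategy.
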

\begin{proof}
The painter has to color the part of $G$ that is not in $G'$ such that the resulting colored graph has at most $k$ different colors.
We notice that all remaining vertices have a least one neighbor in $B$ which means that the color $3$ is not available for any vertices.
This means that there are $k-1=3n/2+t+1$ colors left. Moreover, only the colors true and false are available for vertices in $X$.
We have already defined colors 1 and 2 to be called true and false. We call the next $3n/2$ colors the \emph{H-colors}. The $t-1$ remaining colors,
we call the \emph{T-colors}. The idea is that the $H$-colors will be used in $H$, true and false will be used in $X$ and the $T$ colors will be used in $T$.
Since there are only $t-1$ $T$-colors, the painter will needs to use true, false or an $H$-color on a vertex in $T$. This is only possible because $\phi$ is true. 

Before defining the painter strategy, we need a few preliminaries.
We start by defining \emph{normal play}.
In normal play, when a vertex $x_i$ or $\bar{x}_i$ with $i$ even is requested, the following must hold.
In each $H_j$ with $j \leq \frac{i}2$, both $h_j^1$ and $h_j^4$ have been requested.
We also define a \emph{good request}. A good request is a request to a vertex $t_i \in T$, where the following holds:
For each neighbor $v \in X$ of $t_i$, $v$ does not have color false and $v$'s neighbor in $X$ does not have color true.

For example, if $t_1$ was requested and $x_1$ was a neighbor, it would be a good request only if $x_1$ did not have the color false
(possibly because it had not been presented yet) and $\bar{x}_1$ did not have the color true (it might also not have been presented yet).
Note that when a vertex in $T$ is requested, the painter can identify if it is a good request using Lemma \ref{id2} and the fact that
it knows for each vertex in $T$ which neighbors in $X$ it has. 
We call it a good request because it results in the painter being able to use the color false on that vertex,
which means that he will have enough colors and win the game.

Since $\phi$ is true, there must exist a function $p$, which based on the truth assignment to $x_1,\ldots,x_{i-1}$
computes if variable $x_i$ ($i$ is even) should be true or false if the painter wants to make $F$ true. We define the function $p'$,
which computes if variable $x_i$ should be given color true or false if not all variables $x_1,\ldots,x_{i-1}$ have had their truth assignment
decided yet. For even $i$, we let $p'(x_i)=p(p'(x_1),\ldots,p'(x_{i-1}))$. For odd $i$, we define $p'(x_i)=\text{true}$ if $x_i$ has the color true,
if $\bar{x_i}$ has the color false or if none of them have been presented yet. We define $p'(x_i)=\text{false}$ otherwise. It useful to think of it
the following way: If $x_i$ is requested before $x_j$ and $\bar{x_j}$, $j < i$, the painter will be able to distinguish between $x_j$ and $\bar{x}_j$
when they get requested. Because of this, the painter just decides that $x_j$ is true and  it color $x_j$ and $\bar{x}_j$ accordingly when they get requested.

We now define a strategy for the painter. There are three phases. The painter starts in Phase 1. Certain events will cause the painter to enter Phase 2 or 3,
which in both cases means that the painter from that point can follow a simple strategy to win.
Table \ref{alg} defines how the painter handles a request to a vertex $v$ when in Phase 1. Phase 2 and 3 will be defined subsequently

\begin{center}
    \begin{table} \caption{Table defining a painter strategy} \label{alg}
    \begin{tabular}{ | c | p{2.5cm} | p{3cm} | p{6cm} |} 
    \hline
    Case & Subcase & Subsubcase & Color given \\ \hline
    $v \in V(H_i)$ & & & Color greedily with $H$-colors. \\ \hline
    $v \in V(X)$ & $i$ even & Normal play & Use color $p'(x_i)$ \\ \cline{3-4}
    & & Not normal play & Color greedily with \{true, false\} and go to phase 2. \\ \cline{2-4}
    & $i$ odd & No vertex in $H_{\frac{i+1}2}$ has been requested  & Color greedily with $\{\text{true, false}\}$. \\ \cline{3-4}
    & & At least one vertex in $H_{\frac{i+1}2}$ has been requested  & The painter can identify if the request is to $x_i$ or $\bar{x}_i$. 
	Use color true for $x_i$ and false for $\bar{x}_i$ \\ \hline

    $v \in V(T)$ & Good request & & Use color false and go to phase 3. \\ \cline{2-4}
    & Not good request & & Color greedily with $T$-colors \\ \hline
    \end{tabular} 
    \end{table} 
\end{center}
For now, we claim that if Phase 2 or Phase 3 are ever entered, the painter can win by following a simple strategy.

We show, that under normal play, the drawer will have to eventually make a good request, which makes the painter enter Phase 3.
First, we show that the truth assignment that $x_1,\ldots,x_n$ gets will make $F$ true. When an $x_i$ or $\bar{x}_i$ with even $i$ is requested, the painter will color it based
on the color of $x_1,\ldots,x_{i-1}$. However, since the drawer decides the order, it may happen that the truth values of these have not already been decided. For the variables with an even
index, this is not a problem for the painter, since it can just compute recursively, which color it will apply to it. For a variable with an odd index $x_j$, we defined that the painter
should consider it true (we set $p'(x_j)=\text{true}$ for odd $j$). This is possible since we are under normal play, which means that $h_\frac{j+1}2^1$ and $h_\frac{j+1}2^4$
have already been requested. When $x_j$ and $\bar{x_j}$ are requested, the painter is able to use this to see which one it is. According to Table \ref{alg}, the painter will give $x_j$ color
true and $\bar{x_j}$ color false which is exactly why it is possible for the painter to already consider $x_j$ as true before it has been requested, when $x_i$ is requested under normal play.
Note that $\phi$ is true. Since the painter colors according to $p$, the resulting truth assignment makes at least one term true.
This also gives, that at least one request to a vertex in $T$ will
be good (a request to $t_i \in T$ is not good if and only if term $t_i$ cannot be satisfied by the current truth assignment no matter what truth value the undecided variables are given).
We have now shown that the drawer must eventually make a good request under normal play. This shows that the game will either deviate from normal play at some point (making the
painter enter Phase 2 or make a good request making the painter enter Phase 3. We now define how the painter behaves in Phase 2 and Phase 3 and show why he will win in both cases.

At the beginning of Phase 2, the drawer has just deviated from normal play. He has presented $x_i$ (or $\bar{x}_i$) with $i$ even, even though there exists a $H_j$ with $j \leq \frac{i}2$ where $h_j^1$ and $h_j^4$ have
not both been requested. Note that $H_j$ is bipartite (it is a $P_4$). Since $H$ was colored greedily, and $h_j^1$ and $h_j^4$ have not both been presented,
we know that at most one color is already used in each partition and no color is already used in both partitions.
For future requests in $H_j$, the painter will know which partition the requested vertex is in, since $h_j^2$
and $h_j^4$ are connected to the vertex in $X$ that was just requested. Thus, the painter will only have to use 2 colors for $H_j$. For the remaining requests, the painter colors greedily with $H$-colors
in $H$. He colors greedily with $\{\text{true,false}\}$ in $X$ and he colors greedily with $T$-colors in $T$. When the final vertex in $T$ is requested, there will not be a $T$-color available
(since there are only $t-1$). However, the painter will have one $H$-color that is not needed (the color saved in $H_j$). He uses that as a $T$-color, which ensures that he wins.

At the beginning of phase 3, the painter has just assigned the color false to a vertex in $T$ after a good request. 
Since the request was good, we know that all adjacent vertices in $X$ have been or can be colored true. Their neighbours in $X$ have been or can be colored false.
The remaining vertices in $X$ get colored greedily with $\{\text{true,false}\}$.
The vertices in $H$ will be colored greedily using $H$-colors. The remaining vertex in $T$ will be colored greedily using $T$-colors
which suffices. This ensures, that the painter wins.

We have now presented a strategy for the painter. We have shown that either Phase 2 or Phase 3 will always be entered and we have shown
how the painter wins once such a Phase has been entered.
\qed
\end{proof}

We can now combine Lemmas \ref{one} and \ref{two} and Observation \ref{obob} to get the desired theorem.
\begin{theorem}
Given a state $(G,k,G',f)$ in the on-line graph coloring game, it is \pspace-complete to decide if the painter has a winning strategy.
\end{theorem}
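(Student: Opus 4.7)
The plan is to combine Observation~\ref{obob}, Lemma~\ref{one}, and Lemma~\ref{two} into the two halves of the theorem.

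For membership in \pspace, I would invoke Observation~\ref{obob}, which places the drawer-wins problem in \pspace. The on-line coloring game is a finite perfect-information game with no draws: after at most $2|V(G)|$ moves, either every requested vertex has been colored (painter wins) or some request cannot be colored (drawer wins). Hence from any state exactly one of the two players has a winning strategy, so the painter-wins problem is precisely the complement of the drawer-wins problem. Since \pspace{} is closed under complement, the painter-wins problem lies in \pspace{} as well.

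For \pspace-hardness, I would observe that the construction preceding Lemma~\ref{id} is a many-one reduction from the \tqbf{} variant used in the preliminaries (alternating quantifiers starting with $\forall$, an even number of variables, and a 3DNF matrix, still \pspace-complete by the transformation noted there) to the painter-wins problem. Polynomial-time computability of the reduction is straightforward: $|V(G)|$, $|V(G')|$, $|E(G)|$, $|E(G')|$, and $k = t + 3n/2 + 2$ are all bounded by a polynomial in $n + t$, and each of the edge sets listed in the construction can be written down in polynomial time by iterating over the indicated index ranges. Lemma~\ref{two} gives one direction of correctness, namely that $\phi$ true implies the painter wins, while Lemma~\ref{one} gives the contrapositive of the other, namely that $\phi$ false implies the drawer (and hence not the painter) wins. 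Together they yield the biconditional that certifies the reduction.

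The main intellectual work was already carried out in Lemmas~\ref{one} and~\ref{two}, themselves leaning on the identification Lemmas~\ref{id} and~\ref{id2}; the present theorem is essentially a two-line assembly. The only step that requires explicit mention is the appeal to closure of \pspace{} under complement, together with the observation that the stated restricted form of \tqbf{} remains \pspace-complete so that the reduction legitimately transfers hardness to the painter-wins problem.
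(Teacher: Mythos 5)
Your proposal is correct and follows essentially the same route as the paper, which likewise obtains the theorem by combining Observation~\ref{obob} with Lemmas~\ref{one} and~\ref{two}. You simply make explicit some routine details the paper leaves implicit --- the determinacy of the finite game, closure of \pspace under complement, and the polynomial-time computability of the reduction --- all of which are sound.
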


\section{Closing remarks}
The complexity of the problem of deciding if $\chi^O(G) \leq k$ is still open. It was shown to be \conp-hard in \cite{kudahl},
and it is certainly in \pspace using the argument presented here.
Adding a pre-coloring ensures that the problem is \pspace-complete.
Our work with the problem has led to the following conecture:
\begin{conjecture}
Let a graph $G$ and a $k \in \mathbb{N}$ be given. The problem of deciding if $\chi^O(G) \leq k$ is \pspace-complete.
\end{conjecture}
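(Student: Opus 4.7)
The plan is to reduce the pre-colored version of the problem (established as PSPACE-complete in the preceding theorem) to the plain version of deciding $\ocr(G) \leq k$. PSPACE membership follows from the same tree-search argument as in Observation 1, applied to the empty initial state, so the entire burden lies on hardness. Given an instance $(G,k,G',f)$ of the pre-colored problem, I would construct a graph $G^*$ together with an integer $k^* = k + O(1)$ such that $\ocr(G^*) \leq k^*$ if and only if the painter wins from $(G,k,G',f)$. The guiding idea is to design a \emph{forcing prelude} attached to a copy of $G$ that compels the drawer to present vertices playing the roles of $V(G')$ first, in an order for which any rational painter assigns colors consistent with $f$. Note that modulo relabeling, the painter always assigns color 1 to the first requested vertex, so if the drawer can be forced to open with a vertex playing the role of $c$, that vertex automatically receives color 1 as required by $f$.

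A concrete attempt would reuse the $B$-style identification gadget from the existing construction so that the painter can recognize structural roles from degrees alone, and then add auxiliary traps and high-degree hubs whose purpose is purely to punish the drawer for presenting anything other than the intended prelude first. For example, one could add extra cliques of size slightly less than $k^*$ attached to the prelude so that the painter can absorb any early non-prelude vertex using a spare color, while the drawer saves nothing by doing so. Conversely, the intended opening should collapse exactly to the pre-colored game on $G$: after the painter has assigned the forced colors to the prelude, the remaining game on $V(G) \setminus V(G')$ is equivalent to the pre-colored instance, and the existing proofs of Lemma \ref{one} and Lemma \ref{two} transfer verbatim. A symmetry-breaking argument, perhaps combined with a potential function measuring how many colors the painter has \emph{saved} within each substructure, would then certify that the prelude is the drawer's unique sensible opening.

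The main obstacle, and the reason the paper leaves this as a conjecture, is the drawer's total freedom in ordering vertices. Unlike the pre-colored reduction, where the painter's opening colors are handed down by fiat, here the drawer will refuse to cooperatively present $V(G')$ if doing so hurts him, and he may instead interleave prelude vertices with gadget vertices in ways chosen to confuse the painter's role-identification. The forcing prelude must therefore satisfy the delicate joint property that every deviation from the intended order gives the painter enough slack to win outright, while the intended order faithfully simulates the pre-colored instance. A secondary hazard is that adding auxiliary gadgets must not inadvertently lower $\ocr(G^*)$ below $k^*$ by handing the painter an unintended winning strategy that sidesteps the encoded \tqbf. Designing a gadget that meets both constraints simultaneously, and proving robustness against all drawer strategies rather than a specific intended one, is where I expect the proof to be genuinely difficult.
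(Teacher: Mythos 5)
This statement is a conjecture: the paper offers no proof of it, and its closing remarks explicitly identify the obstacle (the drawer's freedom to request any vertex without the painter knowing which one it is) as the reason the question remains open. Your proposal does not close that gap. It is a research plan built around an unconstructed ``forcing prelude,'' and every load-bearing step is deferred: no gadget is exhibited, no value of $k^*$ is fixed, and the central claim---that every deviation from the intended presentation order can be punished while the intended order faithfully simulates the pre-colored instance---is exactly the ``delicate joint property'' you yourself flag as the hard part. Asserting that a gadget with that property exists is not materially different from asserting the conjecture.

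Two concrete points where the sketch would likely break even if pursued. First, the drawer cannot be forced to open with the prelude in any literal sense; your proposed remedy (extra near-$k^*$ cliques giving the painter ``spare colors'' to absorb early non-prelude vertices) adds slack globally, and that slack is equally available in the intended line of play, threatening the direction where $\phi$ is false and the drawer must still win. You note this hazard but do not resolve it. Second, the claim that Lemmas \ref{one} and \ref{two} ``transfer verbatim'' once the prelude is colored ignores that the pre-coloring $f$ encodes specific relational constraints that a free painter need not respect: all of the independent set $B$ must receive one common color distinct from the $k-3$ colors on $A$ (a rational painter might instead reuse an $A$-color on $B$, since they are non-adjacent, saving a color and changing the budget $k=t+3n/2+2$), and $m$---which has no edges inside $G'$---must share $c$'s color for the argument that $T$ cannot use color true. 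Without the fiat of the pre-coloring, each of these requires its own forcing argument, and none is supplied. The proposal is a reasonable elaboration of the direction the paper itself suggests, but it is not a proof.
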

It seems likely to us, that a reduction from totally quantified boolean formula in 3DNF is possible.
It may be possible to use a similar construction to the one used here, but special attention has to be given to the case where $\phi$ is true.
It is challenging to implement the winning strategy from the satisfiability game when the drawer is able to request any vertex in the graph
without the painter knowing which vertex is being requested.

\bibliographystyle{splncs03}
\bibliography{refs}

 \end{document}